\documentclass[a4paper,12pt]{article}

\usepackage{tikz}
\usetikzlibrary {calc}
\usetikzlibrary {arrows.meta}
\usetikzlibrary {shapes.geometric}
\usetikzlibrary {graphs,shapes.geometric}
\usetikzlibrary {decorations.pathmorphing}

\usetikzlibrary {graphs}

\usepackage[shortlabels]{enumitem}
\usepackage[]{geometry}
\usepackage[utf8]{inputenc}
\usepackage{fancybox, calc}
\usepackage{authblk}
\usepackage{amssymb}
\usepackage{amsmath}
\usepackage{hyperref}
 \usepackage{systeme}
\usepackage{graphicx}
\usepackage{enumitem}
\usepackage{ucs}
\usepackage{amsfonts}
\usepackage{amssymb}
\usepackage{makeidx}
\usepackage{diagbox}
\usepackage{tabularx}
\usepackage{float}
\usepackage{url}
\usepackage{ marvosym }
\usepackage{ wasysym }
\usepackage{ mathrsfs }
\usepackage[utf8]{inputenc}
\usepackage[english]{babel}
\usepackage[normalem]{ulem}
\usepackage{ucs}
\usepackage{amsfonts}
\usepackage{dcolumn}
\usepackage{makeidx}
\usepackage{bm}
\usepackage{multirow}
\usepackage{graphicx}
\usepackage{hyperref}
\usepackage{tikz}
\usepackage{subfig}
\usepackage{booktabs}
\usepackage{xcolor}
\usepackage{fancyhdr}
\usepackage{amsmath, amsthm, amssymb}
\usepackage{multicol}
\usepackage{yfonts}
\usepackage{multicol}
\usepackage{caption}
\usepackage{cancel}
\usepackage{amssymb,amsfonts,amssymb,bbm}
\usepackage{amsmath}
\usepackage{phaistos,hieroglf,staves,ifsym,marvosym,skull}
\usepackage{tikz}
\usepackage{transparent}
\usepackage{eso-pic}
\usepackage{soul}
\usepackage{cite}
\usepackage{lettrine}

\definecolor{rougeG}{rgb}{.76,0,.12}
\definecolor{vertG}{rgb}{.07,.56,.25}

\numberwithin{equation}{section}
\def\Rset{\mathbb{R}}
\def\pdf{f}

\def\Rset{\mathbb{R}}

\def\pdf{f}

\newcommand{\descort}[2][]{\mathfrak{E}_{#1}\ifthenelse{\isempty{#2}}{}{ {\left[ #2 \right]}}}
\newcommand{\escort}[2][]{\varepsilon_{#1}\ifthenelse{\isempty{#2}}{}{ {\left[ #2 \right]}}}

\newtheorem{theorem}{Theorem}[section]

\numberwithin{equation}{section}
\restylefloat{table}

\title{New sharp inequalities involving non-relative, relative and cross informational functionals with some remarkable minimizers of generalized Gaussian and Beta types}
\author[1]{Razvan Gabriel Iagar\footnote{e-mail: razvan.iagar@urjc.es}}
\affil[1]{Departamento de Matemática Aplicada, Ciencia e Ingeniería de los Materiales y Tecnología Electrónica, Universidad Rey Juan Carlos,
		28933 Móstoles (Madrid), Spain}
\author[1,2]{David Puertas-Centeno\footnote{e-mail: david.puertas@urjc.es}}

\affil[2]{Data, Complex Networks and Cybersecurity Research Institute, Universidad Rey Juan Carlos, 28028 (Madrid), Spain}

\date{\today}

\begin{document}

\maketitle

\begin{abstract}
Several new and sharp informational inequalities are derived as a byproduct of Stam-like and moment-entropy-like inequalities in the relative framework and a recently established inequality mixing the R\'enyi entropy, the R\'enyi divergence and the R\'enyi cross entropy of suitable probability density functions. More precisely, we obtain a Stam-like inequality connecting the R\'enyi entropy power, the recently introduced scaling-invariant relative Fisher information and the R\'enyi cross entropy. Furthermore, we derive an inequality involving only Fisher-like informational measures and another inequality involving only moment-like functionals of non-relative, relative and cross types, respectively. All the inequalities are sharp. The minimizers of the Stam-like inequality are, in certain cases, pairs of Gaussian or stretched Gaussian probability densities; in contrast, each minimizer of the moment-like inequality is the probability density of the generalized Beta distribution.
\end{abstract}

\section{Introduction}

The derivation of sharp inequalities has been a cornerstone in the development of scientific knowledge. Fundamental inequalities such as the Cauchy--Schwarz, Jensen, H\"{o}lder or Gagliardo--Nirenberg inequalities have played a pivotal role across numerous areas of science, with strong implications not only in the theoretical advances but also in the understanding of physical phenomena. As a representative example, in statistical mechanics, the prominent role of the Gaussian probability density function in stochastic processes is supported by the central limit theorem, which at its turn is closely connected to the Cramér--Rao inequality as noted in \cite{Brown1982} and to several entropy-related inequalities \cite{Johnson2000, Johnson2004, Toscani2016(a)}, whose optimizing distribution is precisely the Gaussian density. Beyond the range of validity of the central limit theorem, the so-called stretched Gaussian densities have proven to be highly effective in describing a broad variety of physical systems (see for example \cite{Pluchino2008, Umarov2008, Amir2020, Tsallis2023}). Furthermore, the entire family of stretched Gaussian densities can be characterized as the family of minimizers of certain generalizations of the Cram\'er--Rao inequality. Consequently, the derivation of sharp inequalities involving information-theoretic functionals has become an active and important area of research, as seen in  \cite{Rioul2010, Ho2015, Sason2016, Yamano2021, Zozor2021, Wu2025}.

Since the most employed classical informational functionals are (Shannon and R\'enyi) entropies, absolute moments and Fisher information, the inequalities bounding these quantities have become essential in the theory. Some of the most well-known informational inequalities are the Stam inequality, introduced in \cite{Stam1959}, the moment-entropy inequality introduced in its first form in \cite{Shannon1948} and in modern form in the seminal work \cite{Wehrl1978} and the aforementioned Cram\'er-Rao inequality, which is actually an immediate consequence of the previous two inequalities but has been established before them in \cite{Rao1945, Cramer1946}. For a modern treatment of these inequalities and extensions of them, we refer the reader to \cite{Lutwak2004, Lutwak2005, Bercher2012} (see also references therein). More recently, these three inequalities have been extended by the authors and their collaborators to new functionals~\cite{Zozor2017,Puertas2025} and to a mirrored domain of validity for the entropic parameters~\cite{IP2025}, with the aid of Sundman-like transformations. Other extensions of these inequalities to more general classes of informational functionals depending either on the second derivative or on some incomplete weighted integrals of the density have been proved in~\cite{IP2025(b),IP2026b}. Actually, sharp upper bounds of the Stam-like product and moment-entropy product functionals, depending on regularity conditions satisfied by the probability density function, have been established throughout the use of these new functionals in \cite{IP2025(b)}. Moreover, sharp Stam-like and moment-entropy-like inequalities in the relative framework (that is, involing Kullback-Leibler and Rényi divergences) have also been established recently in \cite{IPT2025}.

The present short note is aimed at proving three optimal informational inequalities merging functionals depending on one and two probability density functions; that is, combining the non-relative and the relative framework. Their starting point is a new sharp inequality featuring the differential R\'enyi entropy, the R\'enyi divergence and the R\'enyi cross-entropy of a pair of probability density functions established by the authors in the recent work \cite{IP2026b}, generalizing the trivial equality between the Shannon cross-entropy and the sum of the Shannon entropy with the Kullback-Leibler divergence. More precisely, we prove:

$\bullet$ a Stam-like inequality involving the standard (one-parameter) R\'enyi entropy power, a scaling-invariant biparametric relative Fisher information and the usual R\'enyi cross-entropy.

$\bullet$ an inequality combining only functionals of moment type. We show that the product of the absolute moment (or its associated deviation) and the relative cumulative moment (taken at suitable powers) is optimally bounded from below by a functional called the cross-deviation of two probability density functions.

$\bullet$ an inequality combining only functionals of Fisher type. We prove that the product of the biparametric Fisher information of a density and the biparametric relative Fisher information of a pair or densities is optimally bounded from below by a new functional named generalized cross-Fisher information of the pair of densities.

We believe that one remarkable point of these inequalities is their sharpness. Indeed, all these three inequalities are sharp and their optimizers (that is, pairs of densities reaching the equality case) are given in terms of the stretched deformed Gaussian densities or other interesting special functions appearing in many applications, such as generalized trigonometric or hyperbolic functions and Beta distributions. Moreover, let us stress here that the optimizers of the first inequality of this paper correspond to the previously established minimizers for the generalized (triparametric) Stam inequality introduced in \cite{Puertas2025}, while the minimizers of the second inequality from the above list are actually a pair of first kind Beta distribution functions. Finally, the third inequality is also sharp, but its optimizing pairs rely on solutions to a complicated differential equation which, as far as we know, cannot be explicitly integrated.

A second interesting point of these inequalities is that they feature classical, well-studied informational functionals together with some new functionals whose mathematical expressions are derived in a similar way as the more usual R\'enyi cross-entropy is derived from the standard R\'enyi entropy. Note that the second and third inequalities only combine functionals of the same category, being the analogous one at the level of moments and of Fisher measures to the inequality involving three R\'enyi functionals introduced in \cite{IP2026}.

\section{A brief recall of definitions and results}

For the sake of completeness, we recall here some of the informational functionals and inequalities used in the sequel. In general, if nothing else is specified, we assume that $f$ and $h$ are probability density functions such that
\begin{equation}\label{cond:support}
	\begin{split}
		&{\rm supp}\,f={\rm supp}\,h=\overline{\Omega}, \quad \Omega=(c,d)\subseteq\Rset, \\
		&f(x)>0, \ h(x)>0, \quad {\rm for \ any} \ x\in\Omega,
	\end{split}
\end{equation}
where $\Omega$ can be either bounded or unbounded and $\overline{\Omega}$ denotes the closure of the set $\Omega$. All the integrals throughout the paper are assumed to be defined in the support of the integrand. We present next, in a sequence of dedicated paragraphs, the definitions of the informational functionals employed in the sequel.

\medskip

\noindent \textbf{R\'enyi entropy.} Given $\alpha\neq1$, the differential R\'enyi entropy of $\alpha$-order of a probability density function $\pdf$ is defined as
\begin{equation*}
	R_\alpha[\pdf] = \frac1{1-\alpha}\log\left( \int_\Rset\pdf^\alpha(x) \, dx \right).
\end{equation*}
In the limiting case $\alpha=1$ we recover the well-known differential Shannon entropy
\begin{equation*}
	\lim\limits_{\alpha\to 1}R_{\alpha}[\pdf]=S[\pdf]=-\int_\Rset \pdf(x) \, \log\pdf(x) \, dx.
\end{equation*}
A related functional is the \emph{R\'enyi entropy power}, defined by the exponential of the R\'enyi entropy, that is $N_{\alpha}[f]=\exp\{R_{\alpha}[f]\}$.

\medskip

\noindent \textbf{R\'enyi divergence.} Given $\alpha\neq1$, the differential R\'enyi divergence of $\alpha$-order of a pair of probability density functions $\pdf$ and $g$ satisfying \eqref{cond:support} is defined as
\begin{equation*}
	D_\alpha[\pdf||g] = \frac1{\alpha-1} \log\left( \int_\Rset\pdf^{\alpha}(x)g^{1-\alpha}(x) \, dx \right).
\end{equation*}
In the limiting case $\alpha=1$ we recover the well-known Kullback-Leibler divergence
\begin{equation*}
	\lim\limits_{\alpha\to 1}D_{\alpha}[\pdf||g]=D[\pdf||g]=\int_\Rset \pdf(x) \, \log\frac{\pdf(x)}{g(x)} \, dx.
\end{equation*}

\medskip

\noindent \textbf{R\'enyi cross-entropy.} Given $\alpha\neq1$, the differential R\'enyi cross-entropy of $\alpha$-order of a probability density function $\pdf$ relative to a probability density function $g$ is defined as
\begin{equation*}
	H_\alpha[\pdf;g] = \frac1{1-\alpha} \log\left( \int_\Rset \pdf(x)g^{\alpha-1}(x) \, dx \right).
\end{equation*}
In the limiting case $\alpha=1$ we recover the well-known differential Shannon cross-entropy
\begin{equation*}
	\lim\limits_{\alpha\to 1}H_{\alpha}[\pdf]=H[\pdf;g]=-\int_\Rset \pdf(x) \, \log g(x) \, dx.
\end{equation*}
Note that $H_{\alpha}[f;f]=R_{\alpha}[f]$.

\medskip

\noindent \textbf{The $(p,\lambda)$-Fisher information} is a generalization of the classical (non-parametric) Fisher information
\begin{equation*}
	F[f]=\int_\Rset f(x)\left(\frac{f'(x)}{f(x)}\right)^2\,dx
\end{equation*}
and it was introduced in \cite{Lutwak2005, Bercher2012, Bercher2012a}. Given $p>1$ and $\lambda\in\Rset\setminus\{0\}$, the $(p,\lambda)-$Fisher information of a probability density function $f$, which is assumed to be derivable on its support, is defined as
\begin{equation*}
F_{p,\lambda}[f]\: = \: \int_\Rset \left|f^{\lambda-2}(x) \, \frac{{\rm d}f}{{\rm d}x}(x)\right|^{p} \pdf(x) \,dx.
\end{equation*}
It is more convenient for informational inequalities to use the following form of the functional:
\begin{equation}\label{eq:FI}
\phi_{p,\lambda}[f]:=\left(F_{p,\lambda}[f]\right)^\frac{1}{p\lambda}.
\end{equation}

\medskip

\noindent \textbf{The relative $(p,\lambda)$-Fisher information.} This is an informational functional introduced in \cite{IPT2025} and obtained as the application of the $(p,\lambda)$-Fisher information to the relative differential escort transformation. Let $f$, $h$ be two probability density functions satisfying \eqref{cond:support}, being both derivable on their support, and let $(p,\lambda)\in\Rset^2$ be such that $p>1$ and $\lambda\neq0$. The \textit{relative $(p,\lambda)$-Fisher information} is defined as
\begin{equation*}
\phi_{p,\lambda}[f||h]:=F_{p,\lambda}[f||h]^\frac{1}{p\lambda},\quad {\rm with} \quad 	F_{p,\lambda}[f||h]:=\int_{\Rset}f^{1+p(\lambda-1)}(x)h^{-\lambda p}(x)\left|\frac {\rm {d}}{{\rm d}x}\left(\log\frac{f}{h}\right)(x)\right|^p\,dx.
\end{equation*}

\medskip

\noindent \textbf{The $p$-th absolute moment} of a probability density function $f$, with $p \geqslant 0$, is defined as
\begin{equation*}
\mu_p[f]= \int_\Rset |x|^p f(x) \,dx.
\end{equation*}
It is more common and convenient to consider the related quantity known as the $p$-th deviation, that is,
\begin{equation*}
\begin{split}
&\sigma_p[f] =\left(\int_\Rset |x|^p \, f(x) \,dx \right)^{\frac{1}{p}}, \quad {\rm for} \ p > 0, \\
&\sigma_0[f] = \lim\limits_{p \to 0} \sigma_p[f]  = \exp\left(\int_\Rset f(x) \, \log|x| \,dx \right).
\end{split}
\end{equation*}

\medskip

\noindent \textbf{The relative cumulative moment} is a functional extending to the relative framework the cumulative moments derived (see \cite{Puertas2025}) via the standard differential-escort transformations. Let $f$ and $h$ be two probability density functions satisfying \eqref{cond:support} and let $p>0$, $\alpha\in\Rset$. The \textit{relative cumulative moment} is defined as
\begin{equation*}
\mu_{p,\alpha}[f||h]=\int_{\Omega}\left|\int_{c}^xf(s)^{1-\alpha}h(s)^{\alpha}\,ds\right|^pf(x)\,dx.
\end{equation*}
We also introduce the associated $(p,\alpha)$-deviation by
\begin{equation}\label{eq:relcummom}
\sigma_{p,\alpha}[f||h]:=\mu_{p,\alpha}[f||h]^\frac1{p\alpha}.
\end{equation}

\medskip

\noindent \textbf{Some special functions.} We recall in this last paragraph of preliminary facts two special functions that are commonly related to minimizers of informational inequalities. The first class of them is the stretched Gaussians $g_{p,\lambda}$ given by
\begin{equation}\label{eq:gpl}
g_{p,\lambda}(x) \, = \, \frac{a_{p,\lambda}}{\exp_\lambda\left( |x|^{p^*} \right)}
\, = \, a_{p,\lambda}\, \exp_{2-\lambda}\left( - |x|^{p^*} \right),
\end{equation}
where $\exp_\lambda$ is the generalized Tsallis exponential
\begin{equation*}
\exp_\lambda(x) = \left( 1 + (1 - \lambda) \, x \right)_+^\frac1{1-\lambda}, \ \ \lambda \ne 1, \qquad \exp_1(x) \: \equiv \: \lim_{\lambda \to 1} \, \exp_\lambda(x) \: = \: \exp(x),
\end{equation*}
and the normalization constants $a_{p,\lambda}$ have a rather complicated explicit value which we omit here (see \cite{IP2025}). A second class of special functions involved in the minimizers of our inequalities are the $(p,q)$-generalized trigonometric and hyperbolic functions first defined, to the best of our knowledge, by \cite{Drabek1999}. The $(p,q)$-sine function $\sin_{p,q}$ is defined as the inverse of
\begin{equation*}
	\text{arcsin}_{p,q} (z)=\int_0^z(1-t^q)^{-\frac1p}dt, \quad p,q>1, \quad z\in[0,1).
\end{equation*}
The cosine function is defined as the derivative of the sine function
\begin{equation*}
	\cos_{p,q}(z)=\frac{d}{dz}\sin_{p,q}(z).
\end{equation*}
From these definitions, and applying the inverse function rule, the following identity is obtained:
\begin{equation}\label{eq:pyth}
	\cos_{p,q}^p(z)+\sin_{p,q}^q(z)=1.
\end{equation}
The hyperbolic sine function $\sinh_{p,q}$ is defined in a similar way as the inverse of
\begin{equation*}
	\text{arcsinh}_{p,q} (z)=\int_0^z(1+t^q)^{-\frac1p}dt.
\end{equation*}
The hyperbolic cosine function is defined as the derivative of the hyperbolic sine function
	\begin{equation*}
		\cosh_{p,q}(z)=\frac{d}{dz}\sinh_{p,q}(z) \, ,
	\end{equation*}
and it follows that
	\begin{equation}\label{eq:pythh}
		\cosh^p_{p,q}(z)-\sinh^q_{p,q}(z)=1.
	\end{equation}

\medskip

\noindent \textbf{A sharp inequality between R\'enyi functionals.} The following sharp inequality has been established by the authors in their recent paper \cite{IP2026}. We recall it here, as it is the starting point of the main results of this paper. Let $\alpha,\beta,\gamma$ be three real numbers satisfying the following equality:
\begin{equation}\label{eq:relation}
(\alpha-\beta)(\alpha-\gamma)=(\alpha-1)^2.
\end{equation}
If $\alpha>\beta$ then the following inequality
\begin{equation}\label{eq:ineqRRR}
R_\alpha[f]+D_\beta[f||h]\leqslant H_{\gamma}[f;h]
\end{equation}
holds true for any probability density functions satisfying \eqref{cond:support}. If $\alpha<\beta$ the inequality is reversed. The equality is attained in \eqref{eq:ineqRRR} if and only if
\begin{equation}\label{eq:main}
	h(x)\propto f^{\frac{\beta-1}{\beta-\alpha}}(x),
\end{equation}
where the notation $h(x)\propto g(x)$ indicates that $h(x)=cg(x)$ for some constant $c\neq0$.

\medskip

These are the most common informational functionals and special functions that will appear in the forthcoming inequalities. The main results of this paper can be seen as further applications of the inequality \eqref{eq:ineqRRR} combined with other previously established results. Other more specialized preliminary results and functionals will be recalled at the point where they are employed or referred. For simplicity, we also introduce the following notation
\begin{equation}\label{la}
\lambda_{\alpha}:=1+\alpha(\lambda-1),
\end{equation}
where $\lambda$ and $\alpha$ are real numbers.

\section{A combined Stam-like inequality}

The first result of this work is a Stam-like inequality which mixes a one-parameter R\'enyi entropy power, a biparametric Fisher measure in the relative framework and a R\'enyi cross entropy.
\begin{theorem}\label{th:Stam-like}
Let $p\geqslant1$, $\lambda$, $\alpha$ and $\gamma$ be real numbers satisfying the following conditions:
\begin{equation}\label{eq:conds}
\lambda>\frac{1}{1+p^*}, \quad \alpha(2-\lambda)<1, \quad (\alpha(2-\lambda)-1))(\alpha-\gamma)=(\alpha-1)^2.
\end{equation}
Then there exists a positive constant $K$ depending on the previous parameters such that the following inequality
\begin{equation}\label{ineq:Stam-like}
N_{\alpha}[f]\phi_{p,\lambda\alpha}[f||h]\geqslant K\exp\{H_{\gamma}[f;h]\}
\end{equation}
holds true for any derivable probability density functions $f$ and $h$ satisfying \eqref{cond:support} and such that $(f/h)^{\alpha}$ is an absolutely continuous function on $\overline{\Omega}$. The inequality is sharp and the equality is achieved whenever
\begin{equation}\label{eq:opt}
N_{\frac{1}{\alpha^*}}[g_{p,\lambda}]<\infty, \quad \alpha^*:=\frac{\alpha}{\alpha-1}, \quad {\rm if} \ \alpha\in\real\setminus\{0,1\},
\end{equation}
or
\begin{equation}\label{eq:opt1}
\lambda>1, \quad {\rm if} \ \alpha=1.
\end{equation}
Moreover, the functions achieving equality can be expressed in terms of $g_{p,\lambda}$ and generalized trigonometric and hyperbolic functions.
\end{theorem}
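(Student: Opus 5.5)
The plan is to combine the R\'enyi inequality \eqref{eq:ineqRRR} with a relative Stam-like inequality from \cite{IPT2025}. The latter bounds the relative Fisher information $\phi_{p,\lambda\alpha}[f||h]$ from below in terms of the R\'enyi divergence $D_\beta[f||h]$ for the matching index $\beta$.

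\textbf{Step 1: the divergence index.} I take $\beta:=\alpha(2-\lambda)$ in \eqref{eq:relation}. The third condition in \eqref{eq:conds} is then exactly $(\alpha-\beta)(\alpha-\gamma)=(\alpha-1)^2$ with $\beta-1=\alpha(2-\lambda)-1$. The second condition, $\alpha(2-\lambda)<1$, should fix the orientation. We need $\alpha>\beta$, i.e.\ $\alpha(\lambda-1)>0$, so that \eqref{eq:ineqRRR} reads $R_\alpha[f]+D_\beta[f||h]\le H_\gamma[f;h]$. I will check this orientation carefully for each sign of $\alpha$. If it comes out reversed, I will use instead the version of the relative Stam inequality matching the reversed direction.

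\textbf{Step 2: the relative Stam inequality.} I recall from \cite{IPT2025} the inequality, obtained via the relative differential-escort transformation of order $\alpha$,
\[
\phi_{p,\lambda\alpha}[f||h]\,\exp\{-D_{\beta}[f||h]\}\geqslant K_0 .
\]
Its derivation has two ingredients. First, one writes $f_\alpha=(f/h)^{\alpha}$-type escort densities through the change of variable $y'=f^{1-\alpha}h^{\alpha}$, which is where the absolute continuity of $(f/h)^\alpha$ enters. Second, one applies the classical $(p,\lambda)$-Stam inequality $N_\lambda[g]\phi_{p,\lambda}[g]\ge K_{p,\lambda}$, which requires $\lambda>1/(1+p^*)$, and whose minimizers are $g_{p,\lambda}$. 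Here the R\'enyi entropy of the transformed density becomes $-D_\beta[f||h]$ up to a sign, with $\beta=\alpha(2-\lambda)$ coming from the index bookkeeping. This inequality would give $\phi_{p,\lambda\alpha}[f||h]\ge K_0\exp\{D_\beta[f||h]\}$.

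\textbf{Step 3: combining.} Multiplying by $N_\alpha[f]=e^{R_\alpha[f]}$ gives
\[
N_\alpha[f]\,\phi_{p,\lambda\alpha}[f||h]\ge K_0\,e^{R_\alpha[f]+D_\beta[f||h]}.
\]
Step 1 then bounds the right-hand side by $K_0\,e^{H_\gamma[f;h]}$ in the needed direction, which yields \eqref{ineq:Stam-like} with $K=K_0$. The main obstacle is making the two directions compatible. The Stam step gives a lower bound through $D_\beta$, but \eqref{eq:ineqRRR} gives an upper bound on $R_\alpha+D_\beta$. The orientation must therefore be arranged so that the inequality from \eqref{eq:ineqRRR} goes the other way, i.e.\ we need $\alpha<\beta$. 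I would resolve this by tracking signs: the exponent $1/(p\lambda\alpha)$ in $\phi$ and the possibly negative $\alpha$ may flip things. I expect the conditions $\alpha(2-\lambda)<1$ together with the sign of $\alpha(\lambda-1)$ to land in the reversed case of \eqref{eq:ineqRRR}, as needed.

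\textbf{Step 4: sharpness.} Equality requires equality in both steps simultaneously.
\begin{itemize}
\item For \eqref{eq:ineqRRR}, equality holds iff $h\propto f^{(\beta-1)/(\beta-\alpha)}$.
\item For the relative Stam step, the escort-transformed density must be $g_{p,\lambda}$, up to scaling and translation.
\end{itemize}
Substituting the power relation into the change of variable $dy=f^{1-\alpha}h^{\alpha}dx\propto f^{\theta}dx$ turns this into an ODE $f^{\theta}\,dx\propto dy$ with $f(x)=\Psi(g_{p,\lambda}(y))$. Integrating it explicitly produces the generalized trigonometric and hyperbolic functions through identities \eqref{eq:pyth} and \eqref{eq:pythh}. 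Finally, the condition that $f$ be normalizable (and $h$ as well) reduces to finiteness of an integral of a power of $g_{p,\lambda}$, namely $N_{1/\alpha^*}[g_{p,\lambda}]<\infty$. In the case $\alpha=1$ the map degenerates to an exponential change, and the integrability condition becomes $\lambda>1$, which is \eqref{eq:opt1}.
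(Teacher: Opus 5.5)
Your overall plan matches the paper's: the reversed case of \eqref{eq:ineqRRR} is multiplied by the relative Stam inequality of \cite[Theorem 4.1]{IPT2025}, and optimizers are found by matching the two equality conditions and integrating the resulting ODE with generalized trigonometric and hyperbolic functions. However, Step 1 contains an error that breaks the argument: the index $\beta=\alpha(2-\lambda)$ is wrong.

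With that choice, $\alpha-\beta=\alpha(\lambda-1)$, so \eqref{eq:relation} becomes $\alpha(\lambda-1)(\alpha-\gamma)=(\alpha-1)^2$. This is not the third condition in \eqref{eq:conds}; you have identified $\beta-1$ with $\alpha-\beta$.

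The orientation of \eqref{eq:ineqRRR} then depends on the sign of $\alpha(\lambda-1)$, which \eqref{eq:conds} does not control. For example, $\alpha=2$, $\lambda=8/5$ satisfies \eqref{eq:conds}, but your $\beta=4/5<\alpha$. So \eqref{eq:ineqRRR} gives an \emph{upper} bound on $R_\alpha[f]+D_\beta[f||h]$, which cannot be chained with the Stam lower bound. Your Step 3 only hopes the signs work out; with your $\beta$ they do not.

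The index is also wrong on the Stam side. With $y'=f^{1-\alpha}h^{\alpha}$ and $G(y(x))=(f/h)^{\alpha}(x)$, one gets $\int G^{\lambda}\,dy=\int f^{1+\alpha(\lambda-1)}h^{-\alpha(\lambda-1)}\,dx$, hence $N_\lambda[G]=e^{-\alpha D_{\lambda_\alpha}[f||h]}$. The relative Stam inequality therefore reads $e^{-D_{\lambda_\alpha}[f||h]}\phi_{p,\lambda\alpha}[f||h]\geqslant\alpha^{-\frac{1}{\alpha\lambda}}(K^{(1)}_{p,\lambda})^{\frac1\alpha}$. It involves $D_{\lambda_\alpha}$, not $D_{\alpha(2-\lambda)}$, so the divergences in your two steps would not even cancel.

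The repair is to take $\beta=\lambda_\alpha=1+\alpha(\lambda-1)$. Then $\alpha-\beta=\alpha(2-\lambda)-1$, so the third condition in \eqref{eq:conds} is exactly \eqref{eq:relation}. The second condition $\alpha(2-\lambda)<1$ is exactly $\alpha<\beta$, i.e.\ the reversed case $R_\alpha[f]+D_{\lambda_\alpha}[f||h]\geqslant H_\gamma[f;h]$, with no sign tracking needed. The same $D_{\lambda_\alpha}$ appears in both factors and cancels on multiplication.

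Step 4 must be redone with this $\beta$:
\begin{itemize}
\item Equality in \eqref{eq:ineqRRR} gives $f\propto h^{(1+\alpha(\lambda-2))/(\alpha(\lambda-1))}$.
\item Combined with $f=[rg_{p,\lambda}(ry)]^{1/\alpha}h$ and $y'=[rg_{p,\lambda}(ry)]^{-1/\alpha^*}h$, this yields $rg_{p,\lambda}(ry)\propto h^{(1-\alpha)/(\lambda-1)}$.
\item This leads to the autonomous equation $y'\propto[rg_{p,\lambda}(ry)]^{\kappa}$ with $\kappa=-\frac{1}{\alpha^*}+\frac{\lambda-1}{1-\alpha}$.
\item It is integrated by ${\rm arcsinh}_{\frac{\lambda-1}{\kappa},p^*}$ if $\lambda<1$ or $\arcsin_{\frac{\lambda-1}{\kappa},p^*}$ if $\lambda>1$.
\end{itemize}
Finally, for $\alpha=1$ nothing becomes exponential. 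One has $h=y'$, so $\int h\,dx$ is the length of the support of $y\mapsto rg_{p,\lambda}(ry)$, which is finite iff $\lambda>1$; this is where \eqref{eq:opt1} comes from.
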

Let us mention here that the optimal constant $K$ (in the cases when the inequality is sharp) can be explicitly computed in terms of the optimal constant of the biparametric Stam inequality (see for example \cite{Lutwak2005, Bercher2012}).
\begin{proof}
On the one hand, by taking exponentials in \eqref{eq:ineqRRR} and particularizing $\beta=\lambda_{\alpha}$, we find
\begin{equation}\label{eq:interm1}
N_{\alpha}[f]e^{D_{\lambda_{\alpha}}[f||h]}\geqslant e^{H_{\gamma}[f;h]},
\end{equation}
where the assumption \eqref{eq:relation} is equivalent to the third condition in \eqref{eq:conds}. Let us notice that the condition $\alpha<\beta$ required for the inequality sign in \eqref{eq:interm1} is equivalent to
$$
\alpha<\lambda_{\alpha}=1+\alpha(\lambda-1),
$$
which leads to the second condition in \eqref{eq:conds}. On the other hand, let us recall the Stam-like inequality in the relative framework derived in \cite[Theorem 4.1]{IPT2025}, which in a particular case states that for any $p\geqslant1$, $\lambda>\frac{1}{1+p^*}$ and $(f/h)^{\alpha}$ absolutely continuous, then
\begin{equation}\label{eq:interm2}
e^{-D_{\lambda_{\alpha}}[f||h]}\phi_{p,\lambda\alpha}[f||h]\geq\alpha^{-\frac{1}{\alpha\lambda}}\left(K_{p,\lambda}^{(1)}\right)^\frac1{\alpha},
\end{equation}
where the constant $K_{p,\lambda}^{(1)}$ is the optimal bound of the generalized Stam inequality (see~\cite{Lutwak2005}) adopting the notation defined in~\cite{Puertas2025, IPT2025}. By multiplying \eqref{eq:interm1} and \eqref{eq:interm2}, we arrive at \eqref{ineq:Stam-like}, as claimed.

\medskip

\noindent \textbf{Sharpness and optimizers.} In \eqref{eq:interm1}, the equality is achieved, according to \eqref{eq:main}, for pairs $(f,h)$ of densities such that
\begin{equation}\label{eq:interm3}
f(x)\propto h(x)^{\frac{\lambda_{\alpha}-\alpha}{\lambda_{\alpha}-1}}=h(x)^{\frac{1+\alpha(\lambda-2)}{\alpha(\lambda-1)}}.
\end{equation}
The inequality \eqref{eq:interm2} is sharp whenever the conditions \eqref{eq:opt} or \eqref{eq:opt1} are satisfied, as proved in \cite[Theorem 4.1]{IPT2025}. Moreover, letting either
\begin{equation}\label{eq:defr}
r:=N_{\frac{1}{\alpha^*}}[g_{p,\lambda}]<\infty,  \quad {\rm if} \ \alpha\neq1
\end{equation}
or $r$ to be the (finite) length of the compactly supported function $g_{p,\lambda}$ if $\alpha=1$ and $\lambda>1$, the inequality \eqref{eq:interm2} is minimized by the following function expressed in terms of the stretched Gaussians $g_{p,\lambda}$ defined in \eqref{eq:gpl}:
\begin{equation}\label{eq:interm4}
f_{\rm min}(x)=\left[rg_{p,\lambda}(ry(x))\right]^{\frac{1}{\alpha}}h(x), \quad y'(x)=\left[rg_{p,\lambda}(ry(x))\right]^{-1/\alpha^*}h(x).
\end{equation}
By combining \eqref{eq:interm3} and \eqref{eq:interm4}, we obtain that
$$
h(x)^{\frac{1+\alpha(\lambda-2)}{\alpha(\lambda-1)}}\propto\left[rg_{p,\lambda}(ry(x))\right]^{\frac{1}{\alpha}}h(x),
$$
or, equivalently,
\begin{equation}\label{eq:interm7}
rg_{p,\lambda}(ry(x))\propto h(x)^{\frac{1-\alpha}{\lambda-1}}.
\end{equation}
We next replace the previous equivalence (modulo a normalization constant) in the expression of $y'(x)$ in \eqref{eq:interm4} to find
\begin{equation}\label{eq:interm5}
y'(x)\propto\left[rg_{p,\lambda}(ry(x))\right]^{\kappa}, \quad \kappa:=-\frac{1}{\alpha^*}+\frac{\lambda-1}{1-\alpha}.
\end{equation}
The differential equation \eqref{eq:interm5} can be integrated in terms of special functions. Assume first that $\lambda<1$. Then we have (ignoring the proportionality constant in order to replace $\propto$ by equality, for simplicity)
\begin{equation*}
\begin{split}
x&=\int_0^{y(x)}\frac{d \overline y}{r^{\kappa}g_{p,\lambda}^{\kappa}(r\overline  y)}=\frac{1}{(ra_{p,\lambda})^{\kappa}}\int_0^{y(x)}\frac{d\overline y}{\left(1+(1-\lambda)(r\overline y)^{p^*}\right)^{\frac{\kappa}{\lambda-1}}}\\
&=\frac{1}{(ra_{p,\lambda})^{\kappa}}\int_0^{|1-\lambda|^{\frac{1}{p^*}}ry(x)}\frac{|1-\lambda|^{-\frac{1}{p^*}}}{r(1+s^{p^*})^{\frac{\kappa}{\lambda-1}}}\,ds\\
&=\frac{1}{r^{\kappa+1}a_{p,\lambda}^{\kappa}|1-\lambda|^{\frac{1}{p^*}}}{\rm arcsinh}_{\frac{\lambda-1}{\kappa},p^*}\left(|1-\lambda|^{\frac{1}{p^*}}ry(x)\right).
\end{split}
\end{equation*}
A completely similar calculation can be performed in the case $\lambda>1$ by simply replacing the generalized hyperbolic function by the generalized trigonometric function to find
$$
x=\frac{1}{r^{\kappa+1}a_{p,\lambda}^{\kappa}|1-\lambda|^{\frac{1}{p^*}}}\arcsin_{\frac{\lambda-1}{\kappa},p^*}\left(|1-\lambda|^{\frac{1}{p^*}}ry(x)\right).
$$
We infer by inverting the previous inequalities that
\begin{equation}\label{eq:interm6}
|1-\lambda|^{\frac{1}{p^*}}ry(x)=\begin{cases}
                                   \sinh_{\frac{\lambda-1}{\kappa},p^*}\left(r^{\kappa+1}a_{p,\lambda}^{\kappa}|1-\lambda|^{\frac{1}{p^*}}x\right), & \mbox{if } \lambda<1,\\
                                   \sin_{\frac{\lambda-1}{\kappa},p^*}\left(r^{\kappa+1}a_{p,\lambda}^{\kappa}|1-\lambda|^{\frac{1}{p^*}}x\right), & \mbox{if } \lambda>1.
                                 \end{cases}
\end{equation}
We then derive from \eqref{eq:interm6} and \eqref{eq:interm7} that, once more choosing $\lambda<1$ as sample case,
\begin{equation*}
\begin{split}
h(x)^{\frac{1-\alpha}{\lambda-1}}&\propto ra_{p,\lambda}\left(1+(1-\lambda)r^{p^*}y(x)^{p^*}\right)^{\frac{1}{\lambda-1}}\\
&=ra_{p,\lambda}\left(1+\sinh_{\frac{\lambda-1}{\kappa},p^*}^{p^*}\left(r^{\kappa+1}a_{p,\lambda}^{\kappa}(1-\lambda)^{\frac{1}{p^*}}x\right)\right)^{\frac{1}{\lambda-1}}\\
&=ra_{p,\lambda}\cosh_{\frac{\lambda-1}{\kappa},p^*}^{\frac{1}{\kappa}}\left(r^{\kappa+1}a_{p,\lambda}^{\kappa}(1-\lambda)^{\frac{1}{p^*}}x\right).
\end{split}
\end{equation*}
We thus finally obtain that
\begin{equation}\label{eq:interm8}
h(x)=\begin{cases}
       (ra_{p,\lambda})^{\frac{\lambda-1}{1-\alpha}}\cosh_{\frac{\lambda-1}{\kappa},p^*}^{\frac{\lambda-1}{\kappa(1-\alpha)}}(\overline{x}), & \mbox{if } \lambda<1, \\[2mm]
       (ra_{p,\lambda})^{\frac{\lambda-1}{1-\alpha}}\cos_{\frac{\lambda-1}{\kappa},p^*}^{\frac{\lambda-1}{\kappa(1-\alpha)}}(\overline{x}), & \mbox{if } \lambda>1,
     \end{cases}
\end{equation}
with
$$
\overline{x}:=r^{\kappa+1}a_{p,\lambda}^{\kappa}(1-\lambda)^{\frac{1}{p^*}}x.
$$
The corresponding optimizer $f$ is deduced by simply gathering \eqref{eq:interm3} and \eqref{eq:interm8}. This completes the proof of the optimality of the inequality \eqref{ineq:Stam-like}.
\end{proof}

\noindent \textbf{Remarks. 1. A simpler particular case.} There is a particular case when $\kappa=0$, that is,
$$
0=\frac{\lambda-1}{1-\alpha}-\frac{1}{\alpha^*}=\frac{\lambda-1}{1-\alpha}-\frac{\alpha-1}{\alpha},
$$
which implies
\begin{equation}\label{eq:part}
\lambda=1-\frac{(\alpha-1)^2}{\alpha}.
\end{equation}
In this case there is no change of independent variable in \eqref{eq:interm4} and the pair of minimizers is given by powers of $g_{p,\lambda}$, following readily from \eqref{eq:interm7} (for $h$) and then \eqref{eq:interm3} (for $f$). More precisely,
$$
h(x)\propto g_{p,\lambda}^{\frac{\alpha-1}{\alpha}}(rx)=g_{p,\lambda}^{\frac{1}{\alpha^*}}(rx), \quad f(x)\propto g_{p,\lambda}(rx),
$$
hence in the particular case \eqref{eq:part} we can say, modulo a scaling, that the equality is achieved for $g_{p,\lambda}$ and a power of $g_{p,\lambda}$. Note that, up to a scaling change, $g_{p,\lambda}^\frac{1}{\alpha*}=g_{p,\overline\lambda}$, with $\overline\lambda=\frac{\lambda\alpha-1}{\alpha-1}$.

\medskip

\noindent \textbf{2.} It is worth mentioning that the family of pairs of optimizers of the inequality~\eqref{ineq:Stam-like} are analogous to the single minimizers of the triparametric inequalities of Stam, moment-entropy and Cr\'amer-Rao types. These minimizers involve the generalized trigonometric densities as well as the stretched Gaussian densities $g_{p,\lambda}$, see~\cite{Puertas2025}.

\section{An inequality between moment-like functionals}

In this section, we establish an inequality involving three informational functionals of moment type. Taking into account the structure with levels introduced in \cite{IP2026b}, this inequality is a kind of parallel one to \eqref{eq:ineqRRR} but at the level of moments (or, equivalently, deviations). In order to state the following result, let us recall first the following functional called the \emph{cross-deviation} of $f$ relative to $h$. Let $p,\gamma\in\Rset$ and let $f$, $h$ be two probability density functions. The \emph{cross-deviation} of $f$ relative to $h$ is defined as
\begin{equation}\label{eq:cross-moment}
	\sigma_{p,\gamma}[f;h]:=\left(\int_\mathbb{R}f^{2-\gamma}(x)h^{\gamma-1}(x)\,|x|^p\,dx\right)^\frac1{p}.
\end{equation}
This cross-deviation has been introduced in \cite{IP2026} as a transported functional through the up transformation starting from the R\'enyi cross-entropy.
\begin{theorem}\label{th:SSS}
Let $(p,\lambda,\xi,a,\alpha,\gamma)\in\Rset^6$ be real numbers such that
\begin{equation}\label{eq:condSSS1}
p^*\geqslant0, \quad \lambda>\frac{1}{1+p^*}, \quad \xi>0, \quad a\in\Rset\setminus\{2\}
\end{equation}
and the following conditions are fulfilled:
\begin{equation}\label{eq:condSSS2}
\alpha-1>\xi(\lambda-1), \quad [\alpha-1-\xi(\lambda-1)](\alpha-\gamma)=(\alpha-1)^2.
\end{equation}
Then, there is $K>0$ such that the following inequality
\begin{equation}\label{ineq:SSS}
\sigma_{\frac{\alpha-1}{2-a}}^{\frac{1}{2-a}}[f]\sigma_{p^*,\xi}[f||h]\geq K\sigma_{\frac{\gamma-1}{2-a},\gamma}^{\frac{1}{2-a}}[f;h].
\end{equation}
Under the conditions \eqref{eq:opt} or \eqref{eq:opt1}, the inequality \eqref{ineq:SSS} is sharp and the equality case is achieved on an explicit pair of minimizers $(f,h)$.
\end{theorem}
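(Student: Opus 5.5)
The plan is to mirror the proof of Theorem~\ref{th:Stam-like}, replacing the Stam-like ingredient by a moment-entropy-like one and moving the R\'enyi-level inequality \eqref{eq:ineqRRR} to the level of moments. Throughout I set $\beta:=1+\xi(\lambda-1)$, in analogy with $\lambda_\alpha$ in \eqref{la}. Then the second condition in \eqref{eq:condSSS2} is exactly \eqref{eq:relation} for the triple $(\alpha,\beta,\gamma)$. The first condition $\alpha-1>\xi(\lambda-1)$ is exactly $\alpha>\beta$, so \eqref{eq:ineqRRR} holds with the sign $\leqslant$.

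The first step lifts \eqref{eq:ineqRRR} from R\'enyi functionals to deviations. I would use the ``up'' transformation of \cite{IP2026b,IP2026}, through which the cross-deviation \eqref{eq:cross-moment} was obtained from the R\'enyi cross-entropy. Concretely, given $(f,h)$ on $\Omega$, I introduce a change of variables $y=y(x)$ with $y'(x)$ a suitable power of $f$ (power depending on $a$, hence the restriction $a\neq2$). I then define transformed densities $\tilde f,\tilde h$ in the variable $y$ such that:
\begin{itemize}
\item $\int\tilde f^{\alpha}\,dy$ becomes a multiple of $\int |x|^{\frac{\alpha-1}{2-a}}f\,dx$;
\item $\int\tilde f\,\tilde h^{\gamma-1}\,dy$ becomes $\int f^{2-\gamma}h^{\gamma-1}|x|^{\frac{\gamma-1}{2-a}}\,dx$;
\item the divergence $D_\beta[\tilde f\|\tilde h]$ is expressed through the relative quantity entering $\sigma_{p^*,\xi}[f\|h]$.
\end{itemize}
Applying \eqref{eq:ineqRRR} to $(\tilde f,\tilde h)$ and exponentiating should give an intermediate inequality of the form
\begin{equation*}
\sigma_{\frac{\alpha-1}{2-a}}^{\frac{1}{2-a}}[f]\,e^{D_{\beta}[\tilde f\|\tilde h]}\geqslant \sigma_{\frac{\gamma-1}{2-a},\gamma}^{\frac{1}{2-a}}[f;h],
\end{equation*}
up to explicit constants. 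This is the analogue of \eqref{eq:interm1}. Equality holds exactly when \eqref{eq:main} holds for the transformed pair.

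The second step is the relative moment-entropy inequality of \cite{IPT2025}, used in the same way as \eqref{eq:interm2} was. Under $p^*\geqslant0$ and $\lambda>\frac1{1+p^*}$, it bounds $e^{-D_{1+\xi(\lambda-1)}}\,\sigma_{p^*,\xi}[f\|h]$ from below by a constant built from the optimal constant of the biparametric moment-entropy inequality of \cite{Lutwak2005,Bercher2012}. Multiplying this with the intermediate inequality cancels the divergence and gives \eqref{ineq:SSS} with an explicit $K$. Sharpness then follows as in Theorem~\ref{th:Stam-like}:
\begin{itemize}
\item the relative moment-entropy inequality is attained, under \eqref{eq:opt} or \eqref{eq:opt1}, on pairs written through $g_{p,\lambda}$ after a relative escort change of variable, as in \eqref{eq:interm4};
\item the equality condition \eqref{eq:main} imposes a power relation between the two densities, as in \eqref{eq:interm3};
\item eliminating one of them gives an autonomous ODE of the type \eqref{eq:interm5}.
\end{itemize}
Since $g_{p,\lambda}$ is a first-kind Beta-type profile, integrating this ODE should give the announced Beta-type minimizers.

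The main obstacle is the bookkeeping in the first step: the two transformations must be compatible. The divergence produced by the up transformation of \eqref{eq:ineqRRR} has to be precisely the one controlled by the relative moment-entropy inequality with parameter $\xi$ and relative cumulative moment $\sigma_{p^*,\xi}[f\|h]$, and not some other transported functional. I would first check this on the moments directly: compute $\mu_{p^*,\xi}[f\|h]$ after the change of variables and match exponents to fix the power of $f$ in $y'(x)$ in terms of $a$. If the two do not match directly, the fallback is to apply \eqref{eq:ineqRRR} to the relative-escort pair used in \cite{IPT2025} rather than to the up-transformed pair, and to read off the deviation and cross-deviation from there.
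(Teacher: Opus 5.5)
Your skeleton matches the paper's. You take $\beta=\lambda_\xi$ and use the moment-level transport of \eqref{eq:ineqRRR}; the paper does not re-derive this but cites it as \cite[Theorem 3.5]{IP2026}, in the form \eqref{eq:interm9}. You then multiply by the relative moment-entropy inequality \eqref{ineq:EMrel} of \cite[Theorem 4.1]{IPT2025} and cancel the divergence. However, several concrete steps fail as written.

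\textbf{The transformation.} If $y'$ is a power of $f$ and $\tilde f$ is the corresponding push-forward, then $\int\tilde f^{\alpha}\,dy$ is always of the form $\int f^{s}\,dx$. Such a transformation cannot create the weight $|x|^{\frac{\alpha-1}{2-a}}$. What works is a change of variables driven by $x$, for instance $y'(x)=|x|^{-\frac{1}{2-a}}f(x)$, $\tilde f(y(x))=|x|^{\frac{1}{2-a}}$, $\tilde h(y(x))=h(x)/y'(x)$. Then $\tilde f\,dy=f\,dx$ and $\tilde h\,dy=h\,dx$, and your first two bullets hold exactly. Crucially, $D_\beta[\tilde f\|\tilde h]=D_\beta[f\|h]$, because a R\'enyi divergence is invariant under any change of variables common to both densities. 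This is the missing idea. The compatibility problem you call the main obstacle does not exist: no exponent matching with $\mu_{p^*,\xi}$ is needed, and \eqref{ineq:EMrel} is applied directly to the original pair $(f,h)$. Your fallback through the relative-escort pair could not produce $|x|$-weighted functionals at all.

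\textbf{The signs.} Both signs of the divergence are wrong. Since $N_\alpha[\tilde f]=\sigma^{\frac1{a-2}}_{\frac{\alpha-1}{2-a}}[f]$ and $e^{H_\gamma[\tilde f;\tilde h]}=\sigma^{\frac1{a-2}}_{\frac{\gamma-1}{2-a},\gamma}[f;h]$, the case $\alpha>\beta$ of \eqref{eq:ineqRRR} yields $\sigma^{\frac1{2-a}}_{\frac{\alpha-1}{2-a}}[f]\,e^{-D_\beta[f\|h]}\geqslant\sigma^{\frac1{2-a}}_{\frac{\gamma-1}{2-a},\gamma}[f;h]$. The factor is $e^{-D}$, not $e^{+D}$. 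The relative moment-entropy inequality is \eqref{ineq:EMrel}, with $e^{+D_{\lambda_\xi}}$, because under the relative escort the divergence plays the role of $-R_\lambda$ in $\sigma_{p^*}/N_\lambda\geqslant K$.

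Your version $e^{-D_{\lambda_\xi}}\sigma_{p^*,\xi}[f\|h]\geqslant C$ is false. For $\xi=1$ and $h\equiv1$ on $(0,1)$ it reads $N_\lambda[f]\,\sigma_{p^*}[f]\geqslant C$. This fails for $f\propto e^{-nx}$ as $n\to\infty$, since both factors decay like $1/n$. The two slips cancel in the product, which is why you still land on \eqref{ineq:SSS}, but one of your two ingredients is false as stated. You carried over the sign pattern of \eqref{eq:interm1}--\eqref{eq:interm2}, which comes from $\alpha<\beta$ and from a Fisher-type inequality.

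\textbf{Sharpness.} The equality case does not go through an autonomous ODE as in \eqref{eq:interm5}. The equality condition \eqref{eq:main} for $(\tilde f,\tilde h)$ becomes \eqref{eq:interm10}, $h\propto|x|^{\frac{1-\alpha}{(2-a)(\alpha-\beta)}}f$. This is not a power relation between $f$ and $h$; it involves $x$ explicitly. Combined with \eqref{eq:interm11}, it gives the algebraic relation \eqref{eq:interm13}, i.e.\ $rg_{p,\lambda}(ry(x))\propto|x|^{\theta}$. This is inverted explicitly in \eqref{eq:interm12}, after which $h$ and $f$ follow by a single differentiation. The generalized Beta shape comes from this inversion.
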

Let us stress here that the optimal constant in \eqref{ineq:SSS} can be explicitly expressed in terms of the optimal constant of the moment-entropy inequality (see for example \cite{Lutwak2004, Lutwak2005, Bercher2012a}), as it can be readily seen from the proof.
\begin{proof}
On the one hand, for any $\xi>0$ and $p$, $\lambda\in\Rset$ satisfying the conditions related to them in \eqref{eq:condSSS1}, the following moment-entropy inequality in the relative framework
\begin{equation}\label{ineq:EMrel}
e^{D_{\lambda_{\xi}}}[f||h]\sigma_{p^*,\xi}[f||h]\geq\left(K_{p,\lambda}^{(0)}\right)^{\frac{1}{\xi}},
\end{equation}
has been established for any pair of densities $(f,h)$ satisfying \eqref{cond:support} in \cite[Theorem 4.1]{IPT2025}. We recall that $\sigma_{p^*,\xi}[f||h]$ designs the relative $(p^*,\xi)$-deviation defined in \eqref{eq:relcummom}, $\lambda_{\xi}$ is defined in \eqref{la} and the constant $K_{p,\lambda}^{(0)}$ is the optimal constant in the moment-entropy inequality according to the notation in \cite{IPT2025}. On the other hand, given the real numbers $\alpha$, $\beta$ and $\gamma$ satisfying $\alpha>\beta$ and \eqref{eq:relation}, the inequality \begin{equation}\label{ineq:up}
\left(\sigma_\frac{\alpha-1}{2-a}[f]\right)^\frac1{a-2}e^{D_\beta[f||h]}\leqslant\left(\sigma_{\frac{\gamma-1}{2-a},\gamma}[f;h]\right)^{\frac{1}{a-2}}.
	\end{equation}
holds true for any $a\in\Rset\setminus\{2\}$ and any pair of densities $(f,h)$ satisfying \eqref{cond:support}, according to \cite[Theorem 3.5]{IP2026}. We particularize $\beta=\lambda_{\xi}$ in \eqref{ineq:up} and observe that \eqref{eq:relation} and $\alpha>\beta=\lambda_{\xi}$ reduce exactly to \eqref{eq:condSSS2} after immediate algebraic manipulations. We can thus write \eqref{ineq:up} equivalently as
\begin{equation}\label{eq:interm9}
e^{-D_{\lambda_{\xi}}[f||h]}\geqslant\left(\sigma_\frac{\alpha-1}{2-a}[f]\right)^\frac1{a-2}\left(\sigma_{\frac{\gamma-1}{2-a},\gamma}[f;h]\right)^{\frac{1}{2-a}}.
\end{equation}
By multiplying \eqref{ineq:EMrel} and \eqref{eq:interm9}, we readily obtain the inequality \eqref{ineq:SSS}, with the optimal constant
$$
K:=\left(K_{p,\lambda}^{(0)}\right)^{\frac{1}{\xi}},
$$
thus depending directly on the optimal constant of the entropy-moment inequality established in \cite{Lutwak2004, Lutwak2005}.

\medskip

\noindent \textbf{Sharpness and optimizers.} The equality is achieved in the inequalities \eqref{ineq:up} and its equivalent form \eqref{eq:interm9} for pairs of densities $(f,h)$ satisfying (see \cite[Theorem 3.5]{IP2026})
\begin{equation}\label{eq:interm10}
h(x)\propto|x|^{\frac{1-\alpha}{(2-a)(\alpha-\beta)}}f(x)=|x|^{\frac{1-\alpha}{(2-a)(\alpha-1-\xi(\lambda-1))}}f(x).
\end{equation}
Under the conditions \eqref{eq:opt}, respectively \eqref{eq:opt1}, the minimizers of the inequality \eqref{ineq:EMrel} are given by (see \cite[Theorem 4.1]{IPT2025})
\begin{equation}\label{eq:interm11}
f(x)=[rg_{p,\lambda}(ry(x))]^{\frac{1}{\alpha}}h(x), \quad y'(x)=[rg_{p,\lambda}(ry(x))]^{-\frac{1}{\alpha^*}}h(x),
\end{equation}
recalling that $\alpha^*=\alpha/(\alpha-1)$ is the H\"older conjugate exponent to $\alpha$ (with the convention $\alpha^*=\infty$ if $\alpha=1$) and that the scaling factor $r$ is defined in \eqref{eq:defr} if $\alpha\neq1$ and as the finite length of the support of $g_{p,\lambda}$ given by \eqref{eq:opt1} if $\alpha=1$. Combining \eqref{eq:interm10} and \eqref{eq:interm11}, we obtain that
$$
f(x)\propto[rg_{p,\lambda}(ry(x))]^{\frac{1}{\alpha}}|x|^{\frac{1-\alpha}{(2-a)(\alpha-1-\xi(\lambda-1))}}f(x),
$$
which, after obvious simplifications and recalling the definition of the functions $g_{p,\lambda}$, leads to
\begin{equation}\label{eq:interm13}
ra_{p,\lambda}\left(1+(1-\lambda)r^{p^*}|y(x)|^{p^*}\right)^{\frac{1}{\lambda-1}}\propto|x|^{\theta}, \quad \theta:=\frac{\alpha(1-\alpha)}{(a-2)(\alpha-1-\xi(\lambda-1))}.
\end{equation}
Further algebraic manipulations then give
$$
|ry(x)|^{p^*}\propto\frac{1}{1-\lambda}\left[\left(\frac{|x|^{\theta}}{ra_{p,\lambda}}\right)^{\lambda-1}-1\right],
$$
which is equivalent to
\begin{equation}\label{eq:interm12}
y(x)\propto\frac{1}{r}\left|\left(\frac{|x|^{\theta}}{ra_{p,\lambda}}\right)^{\lambda-1}-1\right|^{\frac{1}{p^*}}.
\end{equation}
We now infer from the change of variable in \eqref{eq:interm11} that
$$
h(x)=[rg_{p,\lambda}(ry(x))]^{\frac{1}{\alpha^*}}y'(x)
$$
which, taking into account \eqref{eq:interm13} and \eqref{eq:interm12}, gives
\begin{equation}\label{eq:minimh}
h(x)=|x|^{\frac{\theta}{\alpha^*}}y'(x)\propto
\left|\left(\frac{|x|^{\theta}}{ra_{p,\lambda}}\right)^{\lambda-1}-1\right|^{-\frac{1}{p}}|x|^{\theta\left(\lambda-1+\frac{1}{\alpha^*}\right)-2}\,x.
\end{equation}
Finally, we can also deduce the optimizer $f(x)$ by gathering \eqref{eq:interm11} and \eqref{eq:minimh}, more precisely
\begin{equation}\label{eq:minimf}
f(x)\propto
\left|\left(\frac{|x|^{\theta}}{ra_{p,\lambda}}\right)^{\lambda-1}-1\right|^{-\frac{1}{p}}|x|^{\theta\lambda-2}\,x,
\end{equation}
mentioning here that, in \eqref{eq:minimh} and \eqref{eq:minimf}, we have possibly neglected a minus sign by introducing it in the constant of proportionality.
\end{proof}

\noindent \textbf{Remarks. 1.} Note that the optimizers of the inequality~\eqref{ineq:SSS} correspond to a pair of first kind generalized Beta probability density functions (see~\cite{McDonald_Xu1995}).

\medskip

\noindent \textbf{2. The critical exponent $a=2$.} An inequality in the style of \eqref{ineq:SSS} can be deduced as well for the critical exponent $a=2$. In this case, \eqref{ineq:up} is replaced by the inequality
\begin{equation}\label{ineq:up2}
\left\langle e^{(1-\alpha)x}\right\rangle_f^{\frac{1}{1-\alpha}}e^{D_{\beta}[f||h]}\leqslant\sigma_{\gamma}^{(E)}[f;h],
\end{equation}
established in \cite[Theorem 3.5]{IP2026}, where
\begin{equation*}\label{eq:cross-moment2}
\sigma^{(E)}_{\gamma}[f;g]:=\left(\int_{\Rset}f^{2-\gamma}(x)g^{\gamma-1}(x)e^{(1-\gamma)x}\,dx\right)^{\frac{1}{1-\gamma}}.
\end{equation*}
is a kind of exponential cross-moment. Once more, particularizing $\beta=\lambda_{\xi}$ and eliminating the Kullback-Leibler divergence between \eqref{ineq:EMrel} and \eqref{ineq:up2} leads to an inequality analogous to \eqref{ineq:SSS}, but involving deviations based on exponential weights in the integrals (such as \eqref{eq:cross-moment2}) instead of powers of the form $|x|^{\theta}$. We leave the details of this deduction to the reader, as both the proof and the study of the optimizers is a straightforward adaptation of the proof of Theorem \ref{th:SSS}.

\section{An inequality between Fisher measures}

We complete the panorama of the inequalities established in this work with an inequality involving only Fisher-like measures, which is an analogous result at the level of Fisher measures to the outcome of \eqref{eq:ineqRRR} (at the level of R\'enyi entropies and divergences) and \eqref{ineq:SSS} (at the level of moments or deviations). We prevent the reader that this inequality is technically more involved, especially when discussing its sharpness and optimizers. Before stating the inequality, let us recall the following definition of the \emph{generalized cross-Fisher information}. If $a,b,c$ are real numbers and $f$ is a derivable probability density function, the $(a,b,c)$-cross-Fisher information of $f$ relative to a probability density $h$ is defined as
\begin{equation}\label{eq:CF}
	\phi^{\rm (cr)}_{a,b,c}[f;h]:=\left(\int_{\mathbb R}f^{1+(a-2)c}(x)\left(\frac{f(x)}{h(x)}\right)^{c}|f'(x)|^{bc}\,dx\right)^{\frac1{c}}.
\end{equation}
With this definition, we have all the needed notions in order to state the last result of this paper.
\begin{theorem}\label{th:FFF}
Let $(\alpha,\beta,a,b,\lambda,p,\gamma)\in\Rset^7$ be such that $p\geq1$, $\alpha>0$, $a\neq0$, $a\neq2b$ and the conditions \eqref{eq:conds} are satisfied. Then, there exists $K>0$ such that the following inequality
\begin{equation}\label{ineq:FFF}
\phi_{(1-\alpha)b,2-\frac{a}{b}}^{2b-a}[f]\phi_{p,\lambda\alpha}[f||h]\geq K\phi_{2-a,b,1-\gamma}^{(cr)}[f;h],
\end{equation}
holds true for any pair of derivable probability density functions $(f,h)$ satisfying \eqref{cond:support} and such that $(f/h)^{\alpha}$ is an absolutely continuous function in $\Rset$. The inequality \eqref{ineq:FFF} is sharp provided that conditions \eqref{eq:opt} or \eqref{eq:opt1} are satisfied, but the minimizers are not explicit and depend on solutions of a differential equation.
\end{theorem}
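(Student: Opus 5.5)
The plan follows the same two-step scheme as Theorems \ref{th:Stam-like} and \ref{th:SSS}: eliminate the R\'enyi divergence between a relative Stam-like inequality and a Fisher-level analogue of \eqref{eq:ineqRRR}.

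\textbf{Step 1: the relative Stam inequality.} We take \eqref{eq:interm2} exactly as in the proof of Theorem \ref{th:Stam-like}. For $p\geq1$, $\lambda>\frac1{1+p^*}$ and $(f/h)^\alpha$ absolutely continuous, it gives
$$
e^{-D_{\lambda_\alpha}[f||h]}\phi_{p,\lambda\alpha}[f||h]\geq\alpha^{-\frac1{\alpha\lambda}}\left(K^{(1)}_{p,\lambda}\right)^{\frac1\alpha}.
$$
Equality holds for the pairs \eqref{eq:interm4}, under \eqref{eq:opt} or \eqref{eq:opt1}. This is where $\alpha>0$ is used.

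\textbf{Step 2: the Fisher-level analogue of \eqref{eq:ineqRRR}.} I expect \cite{IP2026} to contain a "down"-transported version of \eqref{eq:ineqRRR}, the Fisher counterpart of \eqref{ineq:up}. It should be obtained by applying the differential-escort (down) transformation to \eqref{eq:ineqRRR}. It would read, for $\alpha,\beta,\gamma$ satisfying \eqref{eq:relation} with $\alpha<\beta$ and $a\neq0$, $a\neq2b$,
$$
\phi_{(1-\alpha)b,2-\frac ab}^{2b-a}[f]\,e^{D_\beta[f||h]}\geq \phi^{(cr)}_{2-a,b,1-\gamma}[f;h],
$$
up to the exact exponents. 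Equality should hold iff $h\propto f^{\delta}|f'|^{\epsilon}$, for exponents $\delta,\epsilon$ determined by the transformation. If no such statement is available to quote, I will derive it directly. Substitute $g(y)=$ the down-transform of $f$, so that $dy=f^{\,\cdot}|f'|^{\,\cdot}dx$. The Fisher functionals of $f$ then become R\'enyi integrals of $g$, the cross-Fisher becomes a R\'enyi cross-entropy of $(g,\tilde h)$, and $D_\beta$ is invariant under the common transformation. The inequality then follows from \eqref{eq:ineqRRR}.

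\textbf{Step 3: combining.} Set $\beta=\lambda_\alpha$. As in Theorem \ref{th:Stam-like}, \eqref{eq:relation} becomes the third condition in \eqref{eq:conds} and $\alpha<\beta$ becomes $\alpha(2-\lambda)<1$. Multiplying the two inequalities cancels $e^{D_{\lambda_\alpha}[f||h]}$ and yields \eqref{ineq:FFF} with $K=\alpha^{-\frac1{\alpha\lambda}}(K^{(1)}_{p,\lambda})^{1/\alpha}$.

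\textbf{Step 4: sharpness.} Equality requires both the Step 2 condition and \eqref{eq:interm4}. Inserting $f=[rg_{p,\lambda}(ry)]^{1/\alpha}h$ into $h\propto f^\delta|f'|^\epsilon$ gives a relation involving $h$, $h'$, $y$ and $y'$. Using $y'=[rg_{p,\lambda}(ry)]^{-1/\alpha^*}h$ to eliminate $h$, this becomes a second-order nonlinear autonomous ODE for $y$. The main obstacle is showing that this ODE has a solution producing admissible, normalizable densities on some $\Omega$. I would first reduce the order using autonomy (set $y'=P(y)$), then use a local existence argument plus a scaling/normalization argument to produce admissible pairs, without seeking an explicit integration. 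This matches the claim that the minimizers are not explicit.

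Even before the ODE, the main risk is Step 2: pinning down the precise transported inequality and its parameter bookkeeping so that the exponents match \eqref{ineq:FFF}.
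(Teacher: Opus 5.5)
Your proposal is correct and follows the paper's proof. The paper cites \cite[Theorem 3.3]{IP2026} for exactly the Fisher-level inequality you wrote in Step 2, with $\beta=\lambda_\alpha$ and equality iff $h\propto f^A|f'|^B$, where $A=1+\frac{a(1-\alpha)}{\alpha(2-\lambda)-1}$ and $B=\frac{b(\alpha-1)}{\alpha(2-\lambda)-1}$; it multiplies this by \eqref{eq:interm2}, and for sharpness combines the two equality cases into $f\propto rg_{p,\lambda}(ry)y'$ and the ODE \eqref{eq:diff}. Your fallback derivation also works, taking the common change of variables $y'=f^{1-a}|f'|^b$, and your concern about producing admissible normalized ODE solutions goes further than the paper, which simply asserts that any solution of \eqref{eq:diff} yields an optimizing pair.
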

The optimal constant $K$ in the inequality \eqref{ineq:FFF} (in the case when it is sharp) can be expressed in terms of the optimal constant of the biparametric Stam inequality, and is in fact the same as for the inequality \eqref{ineq:Stam-like}.
\begin{proof}
On the one hand, since we are under exactly the same assumptions, we can particularize $\beta=\lambda_{\alpha}$ in \cite[Theorem 4.1]{IPT2025} and establish the inequality \eqref{eq:interm2}. On the other hand, since $a\neq0$, $a\neq2b$ and the condition $\alpha<\beta=\lambda_{\alpha}$ is satisfied, as it is equivalent to the second condition in \eqref{eq:conds}, we infer from \cite[Theorem 3.3]{IP2026} that
\begin{equation}\label{ineq:bip-down}
\phi_{(1-\alpha)b,2-\frac ab}^{2b-a}[f]e^{D_{\lambda_{\alpha}}[f||h]}\geqslant\phi^{\rm (cr)}_{2-a,b,1-\gamma}[f;h],
\end{equation}
under the assumptions \eqref{eq:conds}. We can thus multiply the inequalities \eqref{eq:interm2} and \eqref{ineq:bip-down} in order to derive \eqref{ineq:FFF}, and the optimal constant (whenever the conditions \eqref{eq:opt} or \eqref{eq:opt1} are fulfilled) is exactly the same as for the inequality \eqref{ineq:Stam-like}.

\medskip

\noindent \textbf{Sharpness and optimizers.} On the one hand, the equality is achieved in the inequality \eqref{ineq:bip-down} for pairs of densities $(f,h)$ such that
\begin{equation}\label{eq:interm14}
h(x)\propto f(x)^{A}|f'(x)|^{B}, \quad A=1+\frac{a(1-\alpha)}{\alpha(2-\lambda)-1}, \quad B=\frac{b(\alpha-1)}{\alpha(2-\lambda)-1},
\end{equation}
where $A$ and $B$ are well defined since the condition \eqref{eq:conds} implies that $\alpha(2-\lambda)-1=\alpha-\beta\neq0$. On the other hand, the equality in \eqref{eq:interm2} is achieved for densities $f(x)$ given by \eqref{eq:interm4}. Thus, by replacing $h$ with the proportional right-hand side in \eqref{eq:interm14}, we deduce after obvious simplifications that
\begin{equation}\label{eq:interm15}
\left[rg_{p,\lambda}(ry(x))\right]^{\frac{1}{\alpha}}\propto f(x)^{1-A}|f'(x)|^{-B},
\end{equation}
where $r$ is defined in \eqref{eq:defr}, whenever \eqref{eq:opt} or \eqref{eq:opt1} is in force. Moreover, replacing $h$ from \eqref{eq:interm14} in the expression of the change of variable in \eqref{eq:interm2}, we find
\begin{equation}\label{eq:interm16}
\left[rg_{p,\lambda}(ry(x))\right]^{\frac{1}{\alpha^*}}y'(x)\propto f(x)^{A}|f'(x)|^{B}.
\end{equation}
By multiplying \eqref{eq:interm15} and \eqref{eq:interm16}, we find that
\begin{equation}\label{eq:interm18}
f(x)\propto rg_{p,\lambda}(ry(x))y'(x).
\end{equation}
By taking derivatives, we further get that
$$
f'(x)\propto r^2g_{p,\lambda}'(ry(x))(y'(x))^2+rg_{p,\lambda}(ry(x))y''(x).
$$
We next insert these equivalences of $f(x)$ and $f'(x)$ into \eqref{eq:interm15} in order to derive a closed differential equation of second order for $y(x)$. More precisely, \eqref{eq:interm15} is equivalently written as
\begin{equation}\label{eq:interm17}
f(x)^{\frac{A-1}{B}}|f'(x)|\propto\left[rg_{p,\lambda}(ry(x))\right]^{-\frac{1}{B\alpha}},
\end{equation}
and, by taking into account the proportionality expressions for $f(x)$ and $f'(x)$, \eqref{eq:interm17} becomes
$$
\left[rg_{p,\lambda}(ry(x))\right]^{-\frac{1}{B\alpha}+\frac{1-A}{B}}\propto(y'(x))^{\frac{A-1}{B}}\left[r^2g_{p,\lambda}'(ry(x))y'(x)^2+
rg_{p,\lambda}(ry(x))y''(x)\right].
$$
Noticing that
$$
\frac{1-A}{B}=\frac{a}{b},
$$
we have thus obtained a second order differential equation, that can be written in a simpler form as follows:
\begin{equation}\label{eq:diff}
\left[rg_{p,\lambda}(ry(x))\right]^{-\frac{1}{B\alpha}+\frac{a}{b}-1}\propto(y'(x))^{\frac{2b-a}{b}}
\left[\frac{rg_{p,\lambda}'(ry(x))}{g_{p,\lambda}(ry(x))}+\frac{y''(x)}{y'(x)^2}\right].
\end{equation}
It is quite obvious that \eqref{eq:diff} cannot be explicitly integrated. However, we can conclude by stressing that any solution $y(x)$ of the equation \eqref{eq:diff} produces a pair of optimizers $(f,h)$ defined by \eqref{eq:interm18} and \eqref{eq:interm14} respectively.
\end{proof}

\subsection*{Acknowledgements}

R. G. I. is partially supported by the project PID2024-160967NB-I00 (AEI) funded by the Ministry of Science, Innovation and Universities of Spain and FEDER/EU. D. P.-C. is partially supported by the project PID2023-153035NB-100 (AEI) funded by the Ministry of Science, Innovation and Universities of Spain and “ERDF/EU A way of making Europe” and by Project 2025/SOLCON-160677 funded by Universidad Rey Juan Carlos.

\bigskip

\noindent \textbf{Data availability} Our manuscript has no associated data.

\bigskip

\noindent \textbf{Competing interest} The authors declare that there is no competing interest.

		\bibliographystyle{elsarticle-num}
		\bibliography{refsbis}

\begin{thebibliography}{10}
\expandafter\ifx\csname url\endcsname\relax
  \def\url#1{\texttt{#1}}\fi
\expandafter\ifx\csname urlprefix\endcsname\relax\def\urlprefix{URL }\fi
\expandafter\ifx\csname href\endcsname\relax
  \def\href#1#2{#2} \def\path#1{#1}\fi

\bibitem{Brown1982}
L.~D. Brown, A proof of the central limit theorem motivated by the
  {C}ram{\'e}r-{R}ao inequality, in: G.~Kallianpur, P.~R. Krishnaiah, J.~K.
  Ghosh (Eds.), Statistics and Probability: Essays in Honor of C. R. Rao,
  North-Holland, Amsterdam, 1982, pp. 141--148.

\bibitem{Johnson2000}
O.~Johnson, Entropy inequalities and the central limit theorem, Stochastic
  Processes and Their Applications 88~(2) (2000) 291--304.

\bibitem{Johnson2004}
O.~Johnson, A.~Barron, Fisher information inequalities and the central limit
  theorem, Probability Theory and Related Fields 129~(3) (2004) 391--409.

\bibitem{Toscani2016(a)}
G.~Toscani, Entropy inequalities for stable densities and strengthened central
  limit theorems, Journal of Statistical Physics 165~(2) (2016) 371--389.

\bibitem{Pluchino2008}
A.~Pluchino, A.~Rapisarda, C.~Tsallis, A closer look at the indications of
  $q$-generalized {C}entral {L}imit {T}heorem behavior in quasi-stationary
  states of the {HMF} model, Physica A: Statistical Mechanics and its
  Applications 387~(13) (2008) 3121--3128.

\bibitem{Umarov2008}
S.~Umarov, C.~Tsallis, S.~Steinberg, On a $q$-central limit theorem consistent
  with nonextensive statistical mechanics, Milan Journal of Mathematics 76~(1)
  (2008) 307--328.

\bibitem{Amir2020}
A.~Amir, An elementary renormalization-group approach to the generalized
  central limit theorem and extreme value distributions, Journal of Statistical
  Mechanics: Theory and Experiment 2020~(1) (2020) 013214.

\bibitem{Tsallis2023}
C.~Tsallis, Introduction to Nonextensive Statistical Mechanics: Approaching a
  Complex World, 2nd Edition, Springer, 2023.
\newblock \href {https://doi.org/10.1007/978-3-030-79569-6}
  {\path{doi:10.1007/978-3-030-79569-6}}.

\bibitem{Rioul2010}
O.~Rioul, Information theoretic proofs of entropy power inequalities, IEEE
  Transactions on Information Theory 57~(1) (2010) 33--55.

\bibitem{Ho2015}
S.-W-Ho, S.~Verd{\'u}, Convexity/concavity of {R}{\'e}nyi entropy and
  $\alpha$-mutual information, in: 2015 IEEE International Symposium on
  Information Theory (ISIT), IEEE, 2015, pp. 745--749.

\bibitem{Sason2016}
I.~Sason, S.~Verd{\'u}, $f$-divergence inequalities, IEEE Transactions on
  Information Theory 62~(11) (2016) 5973--6006.

\bibitem{Yamano2021}
T.~Yamano, Skewed {J}ensen--{F}isher divergence and its bounds, Foundations
  1~(2) (2021) 256--264.

\bibitem{Zozor2021}
S.~Zozor, J.-F. Bercher, $\phi$-informational measures: Some results and
  interrelations, Entropy 23~(7) (2021) 911.

\bibitem{Wu2025}
H.~Wu, L.~Yu, Entropic isoperimetric and {C}ram{\'e}r--{R}ao inequalities for
  {R}{\'e}nyi--{F}isher information, IEEE Transactions on Information Theory
  (2025).

\bibitem{Stam1959}
A.~J. Stam, Some inequalities satisfied by the quantities of information of
  fisher and shannon, Information and Control 2~(2) (1959) 101--112.
\newblock \href {https://doi.org/10.1016/S0019-9958(59)90348-1}
  {\path{doi:10.1016/S0019-9958(59)90348-1}}.

\bibitem{Shannon1948}
C.~E. Shannon, A mathematical theory of communication, The Bell System
  Technical Journal 27~(3) (1948) 379--423.

\bibitem{Wehrl1978}
A.~Wehrl, General properties of entropy, Reviews of Modern Physics 50~(2)
  (1978) 221--260.
\newblock \href {https://doi.org/10.1103/RevModPhys.50.221}
  {\path{doi:10.1103/RevModPhys.50.221}}.

\bibitem{Rao1945}
C.~R. Rao, Information and the accuracy attainable in the estimation of
  statistical parameters, Bull. Calcutta Math. Soc 37~(3) (1945) 81--91.

\bibitem{Cramer1946}
H.~Cram{\'e}r, Mathematical {M}ethods of {S}tatistics, Vol.~9 of Princeton
  Mathematical Series, Princeton University Press, Princeton, NJ, 1946.

\bibitem{Lutwak2004}
E.~Lutwak, D.~Yang, G.~Zhang, Moment-entropy inequalities, The Annals of
  Probability 32~(1B) (2004) 757--774.

\bibitem{Lutwak2005}
E.~Lutwak, D.~Yang, G.~Zhang, Cram\'er--{R}ao and moment-entropy inequalities
  for \text{R}\'enyi entropy and generalized {F}isher information, IEEE
  Transactions on Information Theory 51~(2) (2005) 473--478.

\bibitem{Bercher2012}
J.-F. Bercher, On generalized {C}ram{\'e}r--{R}ao inequalities, generalized
  {F}isher information and characterizations of generalized {$q-$}{G}aussian
  distributions, Journal of Physics A: Mathematical and Theoretical 45~(25)
  (2012) 255303.

\bibitem{Zozor2017}
S.~Zozor, D.~Puertas-Centeno, J.~S. Dehesa, On generalized {S}tam inequalities
  and {F}isher--\text{R}{\'e}nyi complexity measures, Entropy 19~(9) (2017)
  493.

\bibitem{Puertas2025}
D.~Puertas-Centeno, S.~Zozor,
  \href{https://dx.doi.org/10.1088/1751-8121/adc960}{Some informational
  inequalities involving generalized trigonometric functions and a new class of
  generalized moments}, Journal of Physics A: Mathematical and Theoretical
  58~(16) (2025) 165002.
\newblock \href {https://doi.org/10.1088/1751-8121/adc960}
  {\path{doi:10.1088/1751-8121/adc960}}.
\newline\urlprefix\url{https://dx.doi.org/10.1088/1751-8121/adc960}

\bibitem{IP2025}
R.~G. Iagar, D.~Puertas-Centeno, A new pair of transformations and applications
  to generalized informational inequalities and {H}ausdorff moment problem,
  Communications in Nonlinear Science and Numerical Simulation 151 (2025)
  109091.
\newblock \href {https://doi.org/https://doi.org/10.1016/j.cnsns.2025.109091}
  {\path{doi:https://doi.org/10.1016/j.cnsns.2025.109091}}.

\bibitem{IP2025(b)}
R.~G. Iagar, D.~Puertas-Centeno, Through and beyond moments, entropies and
  {F}isher information measures: new informational functionals and
  inequalities, Physica D: Nonlinear Phenomena 483 (2025) 134928.

\bibitem{IP2026b}
R.~G. Iagar, D.~Puertas-Centeno,
  \href{https://www.sciencedirect.com/science/article/pii/S1007570426004405}{Generalized
  informational functionals and new monotone measures of statistical
  complexity}, Communications in Nonlinear Science and Numerical Simulation 161
  (2026) 110081.
\newblock \href {https://doi.org/https://doi.org/10.1016/j.cnsns.2026.110081}
  {\path{doi:https://doi.org/10.1016/j.cnsns.2026.110081}}.
\newline\urlprefix\url{https://www.sciencedirect.com/science/article/pii/S1007570426004405}

\bibitem{IPT2025}
R.~G. Iagar, D.~Puertas-Centeno, E.~V. Toranzo, Sharp informational
  inequalities involving {K}ullback-{L}eibler and {R}ényi divergences and a
  family of scaling-invariant relative {F}isher measures, arXiv preprint
  arXiv:2507.17408 (2025).

\bibitem{IP2026}
R.~G. Iagar, D.~Puertas-Centeno, Entropies, cross-entropies and {R}\'enyi
  divergence: sharp three-term inequalities for probability density functions,
  arXiv preprint arXiv:2603.07995 (2026).

\bibitem{Bercher2012a}
J.-F. Bercher, On a {$(\beta,q)-$}generalized {F}isher information and
  inequalities involving {$q-$}{G}aussian distributions, Journal of
  Mathematical Physics 53~(6) (2012).

\bibitem{Drabek1999}
P.~Dr{\'a}bek, R.~Man{\'a}sevich, On the closed solution to some nonhomogeneous
  eigenvalue problems with {$p-$}{L}aplacian, Differential Integral Equations
  12~(6) (1999) 773--788.

\bibitem{McDonald_Xu1995}
J.~B. McDonald, Y.~J. Xu, A generalization of the beta distribution with
  applications, Journal of Econometrics 66~(1-2) (1995) 133--152.

\end{thebibliography}

\end{document}